\def\Av{{\bf A}}
\def\rv{{\bf r}}
\def\jv{{\bf j}}
\def\0v{{\bf 0}}
\def\Av{{\bf A}}
\newtheorem*{lem}{Lemma}
\begin{document}

\title{On the relation between the scalar and tensor exchange-correlation kernels of the time-dependent density-functional theory}

\author{V.~U.~Nazarov}
\email{nazarov@gate.sinica.edu.tw}

\affiliation{Research Center for Applied Sciences, Academia Sinica, Taipei 115, Taiwan}
\affiliation{Department of Physical Chemistry, Far-Eastern National Technical University, Vladivostok, Russia}

\author{G.~Vignale}
\affiliation {Department of Physics and Astronomy, University of Missouri, Columbia, Missouri 65211, USA}

\author{Y.-C.~Chang}
\affiliation{Research Center for Applied Sciences, Academia Sinica, Taipei 115, Taiwan}

\begin{abstract}
The scalar $f_{xc}$ and tensor $\hat{f}_{xc}$ exchange-correlation (xc) kernels
are key ingredients of the time-dependent density functional theory and
the time-dependent current density functional theory, respectively.
We derive a comparatively simple relation between these two kernels under the assumption that the dynamic  xc
can be considered ``weak". We expect our formula to serve as a convenient bridge between the
scalar $f_{xc}$ which directly enters many applications and
the tensor $\hat{f}_{xc}$ which, due to its locality in space,  is much easier to approximate.
\end{abstract}

\pacs{ 31.15.ee; 31.15.eg}
\maketitle

\section{Introduction}
The concept of the dynamic xc kernel $f_{xc}$ plays a central role in time-dependent density functional theory
in the linear-response regime.\cite{Gross-85} In the general non-homogeneous case, it is defined as the kernel of the integral transformation
\begin{eqnarray*}
\delta V_{xc}(\rv,\omega)= \int f_{xc}(\rv,\rv',\omega) \delta n(\rv',\omega) d \rv',
\end{eqnarray*}
which relates the change in the dynamic xc potential $\delta V_{xc}$ to the
change in the particle density $\delta n$, where $\omega$ is the frequency. The knowledge of $f_{xc}$ allows one to obtain the density-response
function $\chi$  of interacting electrons through the relation\cite{Gross-85}
\begin{eqnarray*}
\chi^{-1}(\rv,\rv',\omega) \! = \! \chi^{-1}_{KS}(\rv,\rv',\omega) \! - \! \frac{e^2}{|\rv \! - \! \rv'|}\!  - \! f_{xc}(\rv,\rv',\omega),
\end{eqnarray*}
where $\chi_{KS}$ is the density-response function of Kohn-Sham (KS) non-interacting electrons, and $e$ is the charge of the electron.
The scalar $f_{xc}$ is also an important quantity in applications, e.g., it directly enters the formulas for the many-body contribution to the stopping power of materials for slow ions \cite{Nazarov-05} and  the formula for the impurity resistivity of metals.\cite{Nazarov-APS-08}

It is known, however, that at finite frequency $f_{xc}(\rv,\rv',\omega)$ is a strongly nonlocal function of its spacial variables.\cite{Vignale-95}  This non-locality hinders the construction of easy-to-use approximations in time-dependent DFT.
The need to overcome this difficulty has stimulated the development
of the time-dependent {\em current}-density functional theory  (TDCDFT).\cite{Vignale-96}  The key quantity of TDCDFT is the tensor xc kernel $\hat{f}_{xc}$ defined as
\begin{eqnarray*}
\delta A_{i,xc}(\rv,\omega)= \int \hat{f}_{ik,xc}(\rv,\rv',\omega) \, \delta j_{k}(\rv',\omega) \, d \rv',
\end{eqnarray*}
where $\delta \Av_{xc}$ is the change in the xc vector potential and $\delta \jv$ is the change in the current-density.
It has been established that the tensor xc kernel $\hat{f}_{xc}(\rv,\rv',\omega)$ is a much more local function of its space variables than
its scalar counterpart is. As a result, the LDA to TDCDFT has been developed.\cite{Vignale-96,Vignale-97}

It would be natural to expect that any local or semi-local approximation to $\hat{f}_{xc}$  (see, for instance, Ref. \onlinecite{Tao-07}) can be translated to a corresponding approximation for the scalar $f_{xc}$, which is more immediately useful in many applications. The relation between $\hat{f}_{xc}$ and $f_{xc}$ is, however, nontrivial, and it has been established in the general case only recently.  In its exact form it reads\cite{Nazarov-07,Nazarov-08}
\begin{eqnarray}
f_{xc}\! &=& \!
-\frac{e \omega^2}{c}
\nabla^{-2} \nabla \!\cdot\!
\left\{ \! \hat{f}_{xc}
\!+\! \left(\hat{\chi}^{-1}_{KS} \!-\! \hat{f}_{xc}\right)\!
\left[\hat{T} \! \left(\hat{\chi}^{-1}_{KS}\!-\!\hat{f}_{xc}\right)\! \hat{T}\right]^{\!-1}
\right. \cr\cr
& \! \! \times& \left. \! \!  \! \! \left(\hat{\chi}^{-1}_{KS}-\hat{f}_{xc}\right)
\! - \!
\hat{\chi}^{-1}_{KS}
\left(\hat{T}\hat{\chi}^{-1}_{KS}\hat{T}\right)^{-1} \! \!
\hat{\chi}^{-1}_{KS}
\right\}
\cdot \nabla \nabla^{-2}.
\label{via}
\end{eqnarray}
Here  $\hat{\chi}_{KS}$ is  the KS current-density response function (a tensor),
$\hat{T}$ is the projector on the sub-space of transverse vectors, $\nabla$ is the gradient operator, $\nabla^{-2}$ is the inverse Laplacian, and $c$ is the velocity of light.  It must be pointed out that the presence of the KS current-density response function in this expression is unavoidable as long as there is more than one spatial dimension (for the one-dimensional case, see Eq.~(\ref{f0_xc})  below).\cite{Nazarov-07}

Although Eq. (\ref{via}) is exact and general, its direct use is not easy, since it involves repeated inversions
of tensor integral operators. The purpose of this Communication is to propose a simpler and more practical relation even at the price of imposing some restrictions.
As it will be shown below, Eq. (\ref{via}) can be, indeed, considerably simplified by the expansion of the right-hand side to the first order
in $\hat{f}_{xc}$. Physically, this amounts to the intuitively clear assumption that $\hat{f}_{xc}$ is ``small" compared with the KS
contribution, the latter entering Eq. (\ref{via}) through $\hat{\chi}^{-1}_{KS}$.  This assumption also underlies one of the most successful approaches to the calculation of excitation energies, namely the perturbative treatment of $f_{xc}$ in the single-pole approximation of Gross, Dobson, and Petersilka.\cite{Gross-96}   Our final result reads
\begin{equation}
f_{xc} = -\frac{c}{e \omega^2} \, \chi^{-1}_{KS} \nabla \cdot \hat{\chi}_{KS} \cdot \hat{f}_{xc} \cdot \hat{\chi}_{KS}  \cdot \nabla \chi^{-1}_{KS}.
\label{f_xc}
\end{equation}

We find it worthwhile to  put Eq.~(\ref{f_xc}) in words: It says that to apply $f_{xc}$ to a scalar function,  first
a scalar operator $\chi^{-1}_{KS}$ must be applied, then the gradient of the result is taken,
then a tensor operator  $\hat{\chi}_{KS}$ is applied to the vector result of the previous operation,
then a tensor operator  $\hat{f}_{xc}$ and $\hat{\chi}_{KS}$ are consecutively applied, again producing a vector,
then a divergence of this vector is found producing a scalar, and, finally,
a scalar operator $\chi^{-1}_{KS}$ is applied.
We hope that the availability of this relatively simple expression will open the way to the practical use of more sophisticated xc kernels in standard applications of time dependent DFT.

In Sec. \ref{Der}, we provide the derivation of Eq.~(\ref{f_xc}).
In Sec. \ref{Disc}, we discuss the result and draw conclusions.
Explicit expressions for KS response functions, in the form convenient to use with our result,
are collected in the Appendix.

\section{Derivation of Eq.~(\ref{f_xc})}
\label{Der}

To prove Eq. (\ref{f_xc})  we will need the following
\begin{lem}
For the response function $\hat{\chi}$ the relation holds
\begin{eqnarray}
\left(\hat{T} \hat{\chi}^{-1} \hat{T}\right)^{-1}
=
\hat{\chi}+\frac{c}{e \omega^2}
\hat{\chi} \cdot {\bf \nabla} \chi^{-1} {\bf \nabla} \cdot \hat{\chi}.
\label{id}
\end{eqnarray}
The same relation holds for the KS response function $\hat{\chi}_{KS}$.
\end{lem}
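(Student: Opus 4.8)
The plan is to verify the identity (\ref{id}) by direct multiplication rather than by literally constructing a pseudo-inverse on the transverse subspace. The only external ingredient I need is the exact kinematic relation between the scalar and tensor response functions, which follows from the continuity equation together with the gauge equivalence of a scalar potential and a longitudinal vector potential ($\delta V \leftrightarrow \delta\Av = \frac{ic}{\omega}{\bf \nabla}\delta V$):
\begin{equation}
\chi = -\frac{c}{e\omega^2}\,{\bf \nabla}\cdot\hat{\chi}\cdot{\bf \nabla}, \qquad \mbox{equivalently}\qquad {\bf \nabla}\cdot\hat{\chi}\cdot{\bf \nabla} = -\frac{e\omega^2}{c}\,\chi ,
\label{cont}
\end{equation}
together with the two elementary facts that the transverse projector annihilates gradients and that the divergence of a transverse field vanishes, $\hat{T}\,{\bf \nabla}=0$ and ${\bf \nabla}\cdot\hat{T}=0$. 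The same relation (\ref{cont}) holds verbatim for the KS quantities, which is why the Lemma covers that case as well.

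First I would record that the right-hand side of (\ref{id}), call it $\hat{R}\equiv\hat{\chi}+\frac{c}{e\omega^2}\hat{\chi}\cdot{\bf \nabla}\chi^{-1}{\bf \nabla}\cdot\hat{\chi}$, is purely transverse on both sides. Applying ${\bf \nabla}\cdot$ from the left and using (\ref{cont}) gives ${\bf \nabla}\cdot\hat{R}={\bf \nabla}\cdot\hat{\chi}\,\big(1+\frac{c}{e\omega^2}(-\frac{e\omega^2}{c})\big)=0$, and symmetrically $\hat{R}\,{\bf \nabla}=0$; hence $\hat{T}\hat{R}\hat{T}=\hat{R}$. This is the step that fixes the overall constant and sign: the prefactor $\frac{c}{e\omega^2}$ in (\ref{id}) is precisely what makes the two contributions to ${\bf \nabla}\cdot\hat{R}$ cancel.

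Next I would check that $\hat{R}$ is a two-sided inverse of $\hat{T}\hat{\chi}^{-1}\hat{T}$ within the transverse subspace. For the right product, the transversality of $\hat{R}$ lets me absorb the inner projector, $(\hat{T}\hat{\chi}^{-1}\hat{T})\hat{R}=\hat{T}\hat{\chi}^{-1}\hat{R}$, and since $\hat{\chi}^{-1}\hat{R}=1+\frac{c}{e\omega^2}{\bf \nabla}\chi^{-1}{\bf \nabla}\cdot\hat{\chi}$, the identity $\hat{T}{\bf \nabla}=0$ kills the second term and leaves $\hat{T}$. The left product is handled identically, using instead ${\bf \nabla}\cdot\hat{T}=0$ and $\hat{R}\hat{\chi}^{-1}=1+\frac{c}{e\omega^2}\hat{\chi}\cdot{\bf \nabla}\chi^{-1}{\bf \nabla}\cdot$. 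Since $\hat{R}$ lives entirely in the transverse–transverse block and inverts $\hat{T}\hat{\chi}^{-1}\hat{T}$ from both sides there, it is the sought inverse, establishing (\ref{id}); replacing $\chi,\hat{\chi}$ by their KS counterparts changes nothing and yields the stated KS version.

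I expect the main obstacle to be conceptual rather than computational: being precise about what $(\hat{T}\hat{\chi}^{-1}\hat{T})^{-1}$ means, since $\hat{T}\hat{\chi}^{-1}\hat{T}$ is singular on the full vector space and is invertible only as an operator on the transverse subspace. The verification-by-multiplication route sidesteps an explicit block/pseudo-inverse construction, but it is legitimate only because I first establish that $\hat{R}$ is itself transverse; that is the load-bearing point and the reason relation (\ref{cont}) must enter. A secondary point to state carefully is the assumed invertibility of the full tensor operator $\hat{\chi}$ (so that $\hat{\chi}^{-1}\hat{\chi}=1$), as distinct from the scalar inverse $\chi^{-1}$ that appears on the right-hand side.
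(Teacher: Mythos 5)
Your proof is correct, but it takes a genuinely different route from the paper's. The paper proceeds \emph{constructively}: it decomposes $\hat{\chi}$ and $\hat{\chi}^{-1}$ into longitudinal/transverse blocks, extracts from $\hat{\chi}\,\hat{\chi}^{-1}=\hat{1}$ the two block equations (\ref{3}) and (\ref{4}), and solves them to arrive at the Schur-complement formula of Eq.~(\ref{6}), $\left(\hat{T}\hat{\chi}^{-1}\hat{T}\right)^{-1}=\hat{\chi}-\hat{\chi}\hat{L}\left(\hat{L}\hat{\chi}\hat{L}\right)^{-1}\hat{L}\hat{\chi}$, after which the kinematic relation (\ref{chit}) together with the explicit form (\ref{L}) of the longitudinal projector converts the $\hat{L}$-dependent term into $\frac{c}{e\omega^2}\hat{\chi}\cdot\nabla\chi^{-1}\nabla\cdot\hat{\chi}$. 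You instead take the right-hand side $\hat{R}$ as given and \emph{verify} by direct multiplication that it is a two-sided inverse of $\hat{T}\hat{\chi}^{-1}\hat{T}$ on the transverse subspace; your preliminary step establishing $\nabla\cdot\hat{R}=\hat{R}\cdot\nabla=0$ is the exact counterpart of the paper's observation that the operator in the square brackets of its Eq.~(\ref{5}) is purely transverse, and you correctly identify it as load-bearing, since without it the projectors could not be absorbed. Both arguments rest on precisely the same ingredients --- the relation (\ref{chit}) and the projector identities $\hat{T}\nabla=0$, $\nabla\cdot\hat{T}=0$ --- and on the same tacit hypotheses (invertibility of the full tensor $\hat{\chi}$ and of the scalar $\chi$; in the paper's route the latter appears as invertibility of $\hat{L}\hat{\chi}\hat{L}$ on the longitudinal subspace). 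What your route buys is brevity and an explicit account of what the restricted inverse $\left(\hat{T}\hat{\chi}^{-1}\hat{T}\right)^{-1}$ means, a point the paper leaves implicit; what the paper's route buys is that it \emph{derives} the formula rather than presupposing it, exposing the Schur-complement structure from which the right-hand side of Eq.~(\ref{id}) could have been discovered in the first place.
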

\begin{proof}
We can write
\begin{eqnarray*}
\hat{\chi}= \hat{L} \hat{\chi} \hat{L} +  \hat{L} \hat{\chi} \hat{T} + \hat{T} \hat{\chi} \hat{L} +\hat{T} \hat{\chi} \hat{T},
\end{eqnarray*}
where $\hat{L}$ is the longitudinal projector
\begin{equation}
L_{ij}=\nabla^{-2} \nabla_i \nabla_j.
\label{L}
\end{equation}
The same expansion can be performed for the inverse operator $\hat{\chi}^{-1}$. Then
\begin{eqnarray*}
\hat{1}&=&\hat{\chi} \, \hat{\chi}^{-1} \cr\cr
&=& \hat{L} \hat{\chi} \hat{L}  \hat{\chi}^{-1} \hat{L} \! + \!
\hat{L} \hat{\chi} \hat{L}  \hat{\chi}^{-1} \hat{T} \! + \!
\hat{L} \hat{\chi} \hat{T}  \hat{\chi}^{-1} \hat{L} \! + \!
\hat{L} \hat{\chi} \hat{T}  \hat{\chi}^{-1} \hat{T} \!  \! \cr\cr
&+& \hat{T} \hat{\chi} \hat{L}  \hat{\chi}^{-1} \hat{L} \! + \!
\hat{T} \hat{\chi} \hat{L}  \hat{\chi}^{-1} \hat{T} \! + \!
\hat{T} \hat{\chi} \hat{T}  \hat{\chi}^{-1} \hat{L} \! + \!
\hat{T} \hat{\chi} \hat{T}  \hat{\chi}^{-1} \hat{T}.
\label{1}
\end{eqnarray*}
Multiplying this by $\hat{T}$ from the right, we have
\begin{eqnarray*}
\hat{T}=
\hat{L} \hat{\chi} \hat{L}  \hat{\chi}^{-1} \hat{T} \! + \!
\hat{L} \hat{\chi} \hat{T}  \hat{\chi}^{-1} \hat{T} \! + \!
\hat{T} \hat{\chi} \hat{L}  \hat{\chi}^{-1} \hat{T} \! + \!
\hat{T} \hat{\chi} \hat{T}  \hat{\chi}^{-1} \hat{T}.
\label{2}
\end{eqnarray*}
Multiplying the last equality  from the left by $\hat{T}$ and by $\hat{L}$, we will have, respectively,
\begin{eqnarray}
\hat{T}=
\hat{T} \hat{\chi} \hat{L}  \hat{\chi}^{-1} \hat{T} \! + \!
\hat{T} \hat{\chi} \hat{T}  \hat{\chi}^{-1} \hat{T}.
\label{3}
\end{eqnarray}
and
\begin{eqnarray}
\hat{0}=
\hat{L} \hat{\chi} \hat{L}  \hat{\chi}^{-1} \hat{T} \! + \!
\hat{L} \hat{\chi} \hat{T}  \hat{\chi}^{-1} \hat{T}.
\label{4}
\end{eqnarray}
Combining Eqs. (\ref{3}) and (\ref{4}) yields
\begin{eqnarray*}
\left[ \hat{T} \hat{\chi} \hat{T} - \hat{T} \hat{\chi} \hat{L} \left(\hat{L} \hat{\chi} \hat{L}\right)^{-1} \hat{L} \hat{\chi} \hat{T} \right]
\hat{T} \hat{\chi}^{-1} \hat{T}= \hat{T},
\end{eqnarray*}
and therefore
\begin{eqnarray}
\left(\hat{T} \hat{\chi}^{-1} \hat{T}\right)^{-1}=
\hat{T} \left[\hat{\chi}  -  \hat{\chi} \hat{L} \left(\hat{L} \hat{\chi} \hat{L}\right)^{-1} \hat{L} \hat{\chi}\right] \hat{T}.
\label{5}
\end{eqnarray}

The operator in the square brackets in Eq. (\ref{5}) is  purely transverse, as one can verify by applying the $\hat{L}$ operator to its left and to its right and obtaining zero. We can, therefore, drop the $\hat{T}$
operators in the right-hand side of Eq. (\ref{5}), which leads to
\begin{eqnarray}
\left(\hat{T} \hat{\chi}^{-1} \hat{T}\right)^{-1}=
\hat{\chi}  -  \hat{\chi} \hat{L} \left(\hat{L} \hat{\chi} \hat{L}\right)^{-1} \hat{L} \hat{\chi}.
\label{6}
\end{eqnarray}

We now recall the relation between the scalar and tensor response functions
\begin{eqnarray}
\chi= -\frac{c}{e \, \omega^2}
\nabla \cdot \hat{\chi} \cdot \nabla,
\label{chit}
\end{eqnarray}
which, combined with Eqs. (\ref{6}) and (\ref{L}), yields Eq. (\ref{id}).
\end{proof}

Now we expand the right-hand side of Eq. (\ref{via}) to the first order in $\hat{f}_{xc}$:
\begin{eqnarray*}
&& f_{xc} \! =
\! - \! \frac{e \omega^2}{c} \nabla^{-2} \nabla \! \cdot \!  \left\{ \! \hat{f}_{xc}
\!-\! \hat{f}_{xc}
\! \left(\hat{T} \hat{\chi}^{-1}_{KS}\! \hat{T}\right)^{\!-1} \! \! \! \! \hat{\chi}^{-1}_{KS}
\! - \! \hat{\chi}^{-1}_{KS}  \left(\hat{T}\hat{\chi}^{-1}_{KS}\hat{T}\right)^{\! -1}  \right. \cr\cr
&&\left.
\times
\hat{f}_{xc} + \hat{\chi}^{-1}_{KS}
\left(\hat{T}\hat{\chi}^{-1}_{KS}\hat{T}\right)^{-1} \hat{f}_{xc}\left(\hat{T}\hat{\chi}^{-1}_{KS}\hat{T}\right)^{\! -1} \! \hat{\chi}^{-1}_{KS}
\right\}
\! \cdot \! \nabla \nabla^{-2}.
\label{via1}
\end{eqnarray*}
Equation (\ref{f_xc}) immediately follows from the above equation by use of Eq.~(\ref{id}).

\section{Discussion and conclusions}
\label{Disc}

We point out that similar to its full form of Eq. (\ref{via}),
in the simplified form of Eq. (\ref{f_xc}), the scalar $f_{xc}$ depends not on
the tensor $\hat{f}_{xc}$ only, but also on the KS response function.
However, in the case of {\em purely longitudinal} $\hat{f}_{xc}$ (e.g., for 1D inhomogeneity)  Eq.~(\ref{f_xc}) reduces to
\begin{equation}
\hat{f}_{xc} = -\frac{c}{e\omega^2} \nabla f_{xc}   \nabla.
\label{f0_xc}
\end{equation}
This agrees with the expression suggested in Ref.~\onlinecite{Dion-05}. We, however, note that in the inhomogeneous 2D and 3D cases, there is no reason for $\hat{f}_{xc}$ to be purely longitudinal (and it, indeed, explicitly is not such within LDA\cite{Vignale-97}). 

It is known  that at isolated frequencies the operator
$\chi_{KS}$ can have zero eigenvalues and, therefore, be non-invertible.\cite{Mearns-87,Gross-88}
This poses an interesting question whether or not the scalar $f_{xc}$  can have a singularity due to the presence of $\chi^{-1}_{KS}$ in Eq.~(\ref{f_xc}) even if  the tensor $\hat{f}_{xc}$ is non-singular. This evidently does not happen in 1D case,
when the presence of $\hat{\chi}_{KS}$ in Eq.~(\ref{f_xc})  compensates the possible singularity leading to Eq.~(\ref{f0_xc}).  However, in the general case, we cannot rule out the possibility that a scalar potential that produces no density response in the KS system (i.e., a null eigenvector of $\chi_{KS}$) may nevertheless produce a finite transverse current response in the same system.  If this happens, then the singularity in $\chi^{-1}_{KS}$ remains uncompensated, and $\hat f_{xc}$ may have singularities that are not present in $f_{xc}$. 
This kind of singularity is, however, impossible
at complex frequencies with finite imaginary part, since in this case
the density-response function $\chi_{KS}$ is invertable.\cite{Ng-89}

In conclusion,
we have considerably simplified the relation between the two key quantities of the time-dependent density
functional theory and the time-dependent current density functional theory: the scalar $f_{xc}$ and the tensor $\hat f_{xc}$, respectively.
This has been achieved at the price of assuming the dynamic exchange-correlations to be weak on the background of the Kohn-Sham response, and solving the problem to the first order with respect to the former. 
We hope that the availability of this approximation, similar in spirit to the perturbative approximation in the Gross-Dobson-Petersilka approach, will stimulate the use of more accurate xc kernels in standard applications of time-dependent DFT. 

\section{Acknowledgements}
This work was supported by Academia Sinica and  by DOE grant DE-FG02-05ER46203. VUN gratefully acknowledges the hospitality of the Department of Physics and Astronomy, Universityof Missouri, Columbia.

\appendix*
\section{}
Apart from the tensor xc kernel $\hat{f}_{xc}$, which is considered an input quantity within the context of this paper,
to use Eq.~(\ref{f_xc}) one needs the tensor
$\hat{\chi}_{KS}$ and the scalar $\chi_{KS}$.
While the explicit forms of the latter two operators are well known, the purpose of this appendix
is to conveniently represent the construct $\hat{\chi}_{KS}\cdot \nabla$, which enters Eq.~(\ref{f_xc}).
The tensor KS response functions can be written  as
\begin{widetext}
\begin{eqnarray}
&&\hat{\chi}_{KS,ij}({\bf r},{\bf r}',\omega) =
\frac{e}{ c m} \, n_0(r) \delta({\bf r}-{\bf r}') \, \delta_{ij} -\frac{e}{4 c m^2} \times \cr\cr
&& \sum\limits_{\alpha \beta} \! \! \frac{f_\alpha-f_\beta}{\omega \! - \! \epsilon_\beta \! + \! \epsilon_\alpha \! + \! i\eta}
\left[\psi^*_\alpha({\bf r}) \nabla_i \psi_\beta({\bf r}) \! - \! \psi_\beta({\bf r}) \nabla_i \psi^*_\alpha({\bf r})\right]
\left[ \psi^*_\beta({\bf r}') \nabla'_j \psi_\alpha({\bf r}') \! - \! \psi_\alpha({\bf r}') \nabla'_j \psi^*_\beta({\bf r}') \right], \ \ \ \
\label{chi_KS}
\end{eqnarray}
\end{widetext}
where $\psi_\alpha({\bf r})$ and $\epsilon_\alpha$ are KS  wave-function
and eigenenergy, respectively, in the state $\alpha$, $f_\alpha$ is the occupation number of this state,
and $\eta$ is an infinitesimal positive. Applying $\nabla$ operator to Eq.~(\ref{chi_KS}) from its right,
gives us a convenient form of the operator $\hat{\chi}_{KS} \cdot \nabla$ which enters Eq.~(\ref{f_xc}) 
\begin{eqnarray}
&&\hat{\chi}_{KS,ij}({\bf r},{\bf r}',\omega) \nabla'_j  =
\frac{e \omega}{2 c m} \sum\limits_{\alpha \beta} \frac{f_\alpha-f_\beta}{\omega-\epsilon_\beta+\epsilon_\alpha+i\eta}\cr\cr
&&\times \left[\psi^*_\alpha({\bf r}) \nabla_i \psi_\beta({\bf r})- \psi_\beta({\bf r}) \nabla_i \psi^*_\alpha({\bf r})\right]
\psi^*_\beta(\rv') \psi_\alpha(\rv'). \  \  \ \ \ \ \
\label{chirightsc1}
\end{eqnarray}
The fact that $\psi_\alpha(\rv)$ satisfy  \protect{Schr\"{o}dinger's} equation with the eigenvalues $\epsilon_\alpha$ has been used.
It can be easily verified that  applying $\nabla$ to Eq.~(\ref{chirightsc1}) from its left and using Eq.~(\ref{chit}) leads to a known
expression for the KS density-response function
\begin{eqnarray*}
\chi_{KS}(\rv,\rv' \! ,\omega)  \! =\! \!
\sum\limits_{\alpha \beta} \!  \frac{f_\alpha-f_\beta}{\omega \! - \! \epsilon_\beta \! + \! \epsilon_\alpha \! + \! i\eta}
 \psi^*_\alpha(\rv) \psi_\beta(\rv)  \psi^*_\beta(\rv') \psi_\alpha(\rv').  \cr\cr
\ \label{chi_KS_s}
\end{eqnarray*}

\end{document}